\documentclass[aps,pra, two column,10pt,superscriptaddress,nofootinbib, notitlepage]{revtex4-2}

\usepackage{pgfplots}
\pgfplotsset{compat=1.18}
\usepackage{amsmath}
\usetikzlibrary{plotmarks}
\usepackage{eufrak} 
\usepackage{physics}
\usepackage{amsmath,amssymb,amsthm,amstext,mathrsfs}
\usepackage{amsthm}
\usepackage[colorlinks=true,citecolor=blue,urlcolor=blue]{hyperref}
\usepackage{comment}
\usepackage{eufrak} %
\usepackage{enumerate}
\usepackage{times,txfonts}
\usepackage{braket}
\usepackage{svg}
\usepackage{color}
\usepackage{natbib}
\usepackage{amsmath,blkarray}
\usepackage{mathtools}
\usepackage{latexsym}
\usepackage{tabularx, booktabs}
\usepackage{graphics,epstopdf}
\usepackage{graphicx, lipsum}
\usepackage{amsfonts}
\usepackage{enumerate}
\usepackage{color,soul}

\newcommand{\be}{\begin{equation}}
\newcommand{\ee}{\end{equation}}
\newcommand{\ba}{\begin{eqnarray}}
\newcommand{\ea}{\end{eqnarray}}

\newtheorem{thm}{Theorem}

\definecolor{ss}{RGB}{250,80,220}
\begin{document}
\title{Semi-device-independent self-testing of unitary operations}
\author{Rajdeep Paul}\email{rajdeeppaul22@gmail.com}\affiliation{Department of Physics, Indian Institute of Technology Hyderabad, Telangana-502284, India }
		\author{Prabuddha Roy}
        \email{prabuddhar@kias.re.kr}
         \affiliation{Quantum Universe Center, Korea Institute for Advanced Study, Seoul 02455, Republic of Korea }
         \author{A. K. Pan}
	\email{akp@phy.iith.ac.in}
	 \affiliation{Department of Physics, Indian Institute of Technology Hyderabad, Telangana-502284, India }
\begin{abstract}
We present a hitherto unexplored semi-device-independent (SDI) self-testing protocol designed to certify unitary operations within a variant of prepare-measure framework. We consider a communication game which we refer to as a variant of $3$-bit prepare-measure random access code (PMRAC) involving two parties, Alice and Bob, who share a prior two-qubit quantum state. Alice encodes her message by applying unitary operations on her subsystem and sends it to Bob. To decode the message, Bob performs a measurement on the whole system. We demonstrate that the optimal quantum advantage of the variant of  $3$-bit PMRAC over the classical bound enables the self-testing of Alice's unitary operations and Bob's measurements. The derivation of the optimal quantum success probability is fully analytical. The approach is so elegant that it can be generalized for any arbitrary $n$-bit PMRAC and may also be extended to other prepare-measure communication games. 

\end{abstract}
\maketitle
\emph{Introduction:-} Self-testing refers to the purest form of quantum certification protocol in which the inner working of devices remains uncharacterized and the dimension of the quantum system is unknown. The optimal quantum violation of a Bell inequality \cite{Bell1964,Brunner2014} enables such a device-independent (DI) self-testing. Since the inception of the first DI self-testing protocol \cite{mayers1998,mayers2004s}, a plethora of protocols have been demonstrated. For recent work, we refer the reader to a review \cite{Supic2020rev}.

However, DI certification encounters practical challenges due to the necessity of a loophole-free Bell test. Although progress has been made in the implementation of such tests \cite{Hensen2015,Shalm2015,Giustina2015}, it remains a challenging task. Due to this, SDI self-testing in prepare-measure scenarios \cite{Tavakoli2018, Tavakoli2020sdic,supic2020input,Pauwels2022,Navascues2023} has become a promising alternative. Such protocols usually posit an upper bound on the Hilbert space dimension, but the inner workings of the devices remain uncharacterized. Of late, a multitude of SDI protocols have been proposed to self-test pure states and projective measurements \cite{Tavakoli2018,Tavakoli2020sdic}, non-projective measurements \cite{Mironowicz2019,Smania2020}, mutually unbiased bases \cite{Farkas2019}, randomness \cite{Li2011,Li2012,Lunghi2015} and unsharpness parameters \cite{Mohan2019,Miklin2020,Abhyoudai2023}. Experimental demonstrations of SDI self-testing of states and measurements have also been reported \cite{Smania2020,Anwer2020,Tavakoli2020sc,Zhang2020}. The certification of quantum operations has been demonstrated \cite{Dallarno2017, SekatskiPrl2018, Wagner2020, Sarkar2024}, in particular, the self-testing of quantum gate has been proposed \cite{Magniez2006, sarkar2025gate}. The classical verifications of quantum operations have been demonstrated in \cite{Reichardt2013, Noller2025}. 

In this Letter, we demonstrate a hitherto unexplored SDI self-testing protocol to certify the unitary operations in a variant of prepare-measure scenario. Specifically, we consider a prepare-measure communication game between two parties, widely known as the random access code (RAC). In traditional PMRAC, the sender Alice prepares the states and then sends them to the receiver Bob. In contrast, in our variant of quantum PMRAC, Alice and Bob initially share a quantum state. Alice then performs local quantum operations on her subsystem of the shared bipartite state of known dimension to encode the information corresponding to the input she receives and sends her subsystem to Bob. According to the input Bob receives, he performs a suitable measurement on the joint system to win the game with the highest success probability. 

For our purpose, we consider the two-level $3$-bit PMRAC and assumue that Alice and Bob share a two-qubit state. We demonstrate that the optimal quantum success probability exceeds the classical $3$-bit RAC. We demonstrate that the optimal quantum value enables the SDI self-testing of shared two-qubit maximally entangled state, Bob's measurements, and Alice's unitary operations. The analytical technique developed here can be straightforwardly adopted for arbitrary $n$-bit PMRAC in the variant of prepare-measure scenario. 


\emph{ $3$-bit classical RAC:-} We briefly recapitulate the two-level $3$-bit classical RAC \cite{Ambainis2009,Tavakoli2015, Roy2024}. In a $3$-bit RAC, Alice randomly receives an input string $x\in \{0,1\}^3$, encodes this information into a physical system by a preparation procedure $P_{x}$ and sends it to Bob. Bob randomly receives input $y\in[n]$ and performs a measurement $B_{y}\equiv\{E_y^b\}$ that produces outcome $b \in \{0,1\}$. The winning rule of the RAC is $b=x_y$, i.e., Bob needs to output $y^{\text{th}}$ bit of Alice's input.  The average success probability can be written as
	\begin{align}
		\label{SP}
		(\mathcal{S}_C)^{3\rightarrow m}=\dfrac{1}{24}\sum_{y=1}^{3}\sum_{x\in \{0,1\}^{3}}p(b=x_{y}|P_{x},E_{y}^{b})
	\end{align}

 To help Bob, Alice can communicate $m<3$ bit message to Bob. The higher the value of $m$, the higher the success probability can be obtained. If Alice communicates one bit of information, then the maximum success probability is $(\mathcal{S}_C)^{3\rightarrow 1}=3/4$ \cite{Ambainis2009}. However, if Alice sends two bits, then $(\mathcal{S}_C)^{3\rightarrow 2}=5/6> (\mathcal{S}_C)^{3\rightarrow 1}$ \cite{Roy2024}.

\emph{ 3-bit quantum RAC:-} In quantum theory, RAC tasks are commonly played in two scenarios. (i) Prepare-measure scenario \cite{ Hayashi2006, Ambainis2009, Spekkens2009,  Chailloux2016, Hameedi2017b,Tavakoli2018, Mohan2019,   Pauwels2022, Abhyoudai2023,singh2025}: Alice prepares the quantum states $\rho_x$ to encode her bit string and send the system to Bob through a classical channel. (ii) Entanglement-assisted scenario \cite{Pawlowski2010,Muhammad2014,Banik2015,Tavakoli2016,Hameedi2017a,Laplante2018,Tavakoli2018b,Ghorai2018,Pan2020,Pan2021,Roy2023,Paul2024}: Alice and Bob pre-share an entangled state, and by performing the measurement on her subsystem, Alice steers the states $\rho_x$ to Bob and sends $m$-bit of the classical message. However, there is yet another scenario that has recently been introduced Tavakoli \emph{et al.} \cite{TavakoliPRXQ2021}, motivated by the dense coding protocol \cite{Bennett1992}. (iii) Entangled-assisted prepare-measure scenario: This scenario features a hybrid prepare-measure scenario \cite{Piveteau2022} assisted by pre-shared entanglement between Alice and Bob, and  Alice applies unitary operations on her system and sends it to Bob.  In this work, we consider a variant of PMRAC which is quite close to the entangled-assisted prepare-measure scenario with a difference that we do not consider prior entanglement being shared between Alice and Bob. Rather, we show that the optimal quantum success probability of our variant of PMRAC self-tests the initial shared state to be maximally entangled. In such a variant of PMRAC we demonstrate the self-testing of Alice's local unitary operations.
\begin{figure}[h]
                 \centering
                 \includegraphics[width=8cm, height=3.5cm]{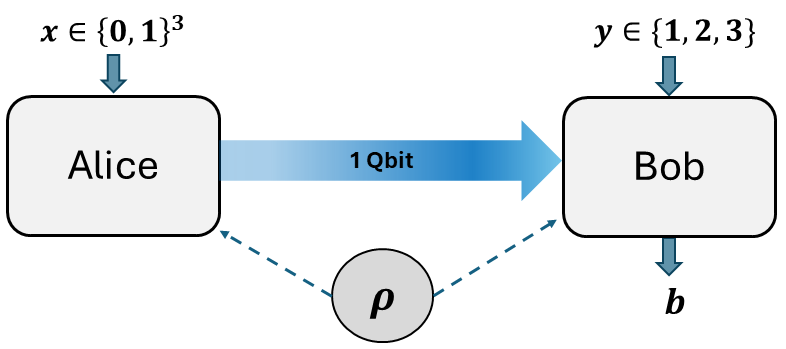}
                 \caption{\textbf{A variant of quantum PMRAC:} Alice and Bob share a two-qubit state $\rho$. Alice performs quantum channel operation $\Lambda_x$ on her qubit as per the input $x\in\{0,1\}^3$ she receives and transmits it to Bob, who performs projective measurement $\Pi_{y}^{b}$ on both qubits and produce binary outcome.}\label{eapmrac}
                 \end{figure}

\emph{$3$-bit variant of quantum PMRAC:} The encoding and decoding strategies in our variant of PMRAC are as follows. Alice and Bob share a bipartite state $\rho \in \mathbb{C}^{2} \otimes \mathbb{C}^{2}$. Upon receiving the input $x\in \{0,1\}^3$, Alice applies a quantum operation $\Lambda_x$ (a completely positive trace-preserving map) to her subsystem, as depicted in Fig.\ref{eapmrac}. She then sends her qubit to Bob. As a result, Bob holds one of the eight two-qubit states, $ 
\rho_{x}=\Lambda_{x}(\rho) $ and performs a two-outcome projective measurement $\{\Pi_{y}^b\}$ on the two-qubit system. The average quantum success probability of the $3$-bit variant of PMRAC can then be written as
\begin{eqnarray}\label{QSP}
  \mathcal{S}_{Q}=\dfrac{1}{24}\sum_{y=1}^{3}\sum_{x\in \{0,1\}^{3}}\Tr\hspace{2pt}[\rho_{x}\Pi^b_y]
\end{eqnarray}

 We introduce an elegant analytical technique to derive the optimal success probability. Using Eq. (\ref{QSP}), the quantum success probability can be explicitly written as
\begin{eqnarray}
\label{eqnA5m}
\nonumber
\mathcal{S}_{Q}=\dfrac{1}{2}&+&\dfrac{1}{48}\mathrm{Tr}\Big[(\rho_{000}+\rho_{011}-\rho_{101}-\rho_{110})B_{1}\\
 \nonumber
 &+&(\rho_{000}-\rho_{011}+\rho_{101}-\rho_{110})B_{2}\\
  \nonumber
 &+&(\rho_{000}-\rho_{011}-\rho_{101}+\rho_{110})B_{3}\\
 \nonumber
 &+&(\rho_{001}+\rho_{010}-\rho_{100}-\rho_{111})B_{1}\\
 &+&(\rho_{001}-\rho_{010}+\rho_{100}-\rho_{111})B_{2}\\
 \nonumber
 &+&(-\rho_{001}+\rho_{010}+\rho_{100}-\rho_{111})B_{3}\Big]
\end{eqnarray} 
where we used $\Pi_{y}^{b} = (\openone_2+(-1)^{b}B_{y})/{2}$. 

Let us look at the first term in Eq. (\ref{eqnA5m}). To obtain the maximum value, the states $\rho_{000}, \rho_{011}, \rho_{110}$ and $\rho_{101}$ must be eigenstates of $B_1$. This demands that they are pure states and satisfy $\rho_{000}+\rho_{110}+\rho_{011}+\rho_{101}=\openone_2\otimes \openone_2$. Therefore, the initial two-qubit state $\rho$ has to be a pure state and the quantum operation $\Lambda_{x}$ Alice applies to her subsystem has to be a unitary operation $U_{x}$, otherwise the purity of the state cannot be preserved. A similar argument holds for each of such terms in Eq. (\ref{eqnA5m}). This leads us to define three two-qubit observables of the form  
\begin{eqnarray}\label{tila1}
\nonumber
M_1 &=& -\rho_{000}-\rho_{011}+\rho_{101}+\rho_{110}\\
M_2&=& -\rho_{000}+\rho_{011}-\rho_{101}+\rho_{110}\\
\nonumber
M_3&=&-\rho_{000}+\rho_{011}+\rho_{101}-\rho_{110}
\end{eqnarray}
Note that, by construction, $[M_{1},M_{2}]=[M_{2},M_{3}]=[M_{1},M_{3}]=0$ and $M_1=-M_2 M_3$.

Similarly, the other set of states $\rho_{001}, \rho_{010},\rho_{111}$ and $\rho_{100}$ are also mutually orthogonal pure states, which again implies $\rho_{001}+\rho_{010}+\rho_{111}+\rho_{100}=\openone_2\otimes \openone_2$. Thus, we can construct three more two-qubit observables given by
\begin{eqnarray}\label{tila2}
\nonumber
N_1&=& \rho_{001}+\rho_{010}-\rho_{100}-\rho_{111}\\
N_2&=&\rho_{001}-\rho_{010}+\rho_{100}-\rho_{111}\\
\nonumber
N_3&=&-\rho_{001}+\rho_{010}+\rho_{100}-\rho_{111}
\end{eqnarray}
where the observables satisfy $[N_{1},N_{2}]=[N_{2},N_{3}]=[N_{1},N_{3}]=0$ and $N_1=-N_2 N_3$.

Thus, the success probability in Eq. (\ref{eqnA5m}) takes the form 
\begin{equation}\label{sp33}
\mathcal{S}_{Q}=\dfrac{1}{2}+\dfrac{\Delta}{48}
\end{equation}
where the correlation function 
\begin{eqnarray}
\label{delta33}
\Delta=\sum_{y=1}^3\Tr[(N_y-M_y) B_y]
\end{eqnarray}

In order to find the maximum quantum value of $\Delta$, we write the correlation function as
\begin{eqnarray}
\Delta=\Tr[\sum\limits_{y=1}^{3}\omega_y\mathscr{M}_y B_y ]\label{wab}
\end{eqnarray}
where $\mathscr{M}_y$ and $\omega_y$ are defined as the following

\begin{eqnarray}
     &&\mathscr{M}_y = \frac{N_y-M_y}{\omega_y} \ ; \ \ 
    \omega_y=||N_y-M_y||; \ \ \forall y\in[3] \label{omegai}
\end{eqnarray}
  such that $||\mathscr{M}_y||=1, \forall y\in[3]$. Here, $||\cdot||$ is the scaled Frobenious norm, given by $||\mathcal{O}||=\frac{1}{\sqrt{4}}\sqrt{Tr[\mathcal{O}^{\dagger}\mathcal{O}]}$.

  Interestingly, since $\mathscr{M}_y$ and $ B_y$ are dichotomic normalized observables, the maximum value of $\mathscr{M}_y B_y$ occurs only when
$\mathscr{M}_y=B_y; \ \ \forall y\in[3]$. This further implies
\begin{equation}\label{optbn}
\Delta_{opt}= \Tr[\max\left(\sum\limits_{y=1}^{3}\omega_{y} \right)\openone_2\otimes\openone_2]=4\max\left(\sum\limits_{y=1}^{3}\omega_{y} \right)
\end{equation}

Next, using the inequality $\sum\limits_{y=1}^{3}\omega_{y}\leq \sqrt{n \sum\limits_{y=1}^{3} (\omega_{y})^{2}}$ where equality holds only when $\omega_y=\omega_{y^\prime}, \forall y\neq {y^\prime}\in [3]$, from Eq.~(\ref{optbn}), we get
\begin{eqnarray}\label{A2}
\Delta&\leq&4\qty( \max \sqrt{3 \ \qty(\omega_1^2+\omega_2^2+\omega_3^2)} \ )
\end{eqnarray}
Now, by writing $\omega_y$ from Eq.~(\ref{omegai}) as
\begin{eqnarray}
\omega_y&=&\frac{1}{2}\sqrt{\Tr[2\openone_4-2 M_y N_y]}
\end{eqnarray}
and by putting them in Eq.~(\ref{A2}) we get
\begin{eqnarray}\label{A1}
\Delta
&\leq&4\max\sqrt{3 \ \Big[6-\qty(\Tr[M_1  N_1 + M_2 N_2 + M_3 N_3])/2\Big]}
\end{eqnarray}
By rewriting $M_{y}N_{y}, \ \forall y \in [3]$ in terms of $\rho_{x},\forall x\in\{0,1\}^3$ we find
\begin{eqnarray}
    \Tr[M_1  N_1 + M_2 N_2 + M_3 N_3]&=&-4+4(Tr[\rho_{000} \ \rho_{111} +\rho_{011} \ \rho_{100}\nonumber\\
&&+\rho_{101} \ \rho_{010}+\rho_{110} \ \rho_{001}])
\end{eqnarray}
Therefore,
\begin{eqnarray} \label{delta2}
\nonumber
\Delta\leq4&& \ \max \Big[3(8-2(Tr[\rho_{000} \ \rho_{111} +\rho_{011} \ \rho_{100}+\rho_{101} \ \rho_{010}\nonumber\\
&&+\rho_{110} \ \rho_{001}])\Big]^\frac{1}{2}
\end{eqnarray}
which finally leads to 
\begin{equation}
   \Delta \leq 8\sqrt{6} \label{optbn1}
\end{equation} when \ba
\label{ovzero}\nonumber
\Tr[\rho_{000}\rho_{111}]=\Tr[\rho_{011}\rho_{100}]=\Tr[\rho_{101} \ \rho_{010}]=\Tr[\rho_{110} \ \rho_{001}]=0\\\ea

This also implies 
\begin{equation}
\label{mn}
    \Tr[M_1 N_1]+\Tr[M_2 N_2]+\Tr[M_2 N_2]=-4
\end{equation}
Using Eq. (\ref{mn}), we proved in Appendix \ref{appsos} that optimization fixes $\omega_y=\sqrt{\frac{8}{3}}, \forall y\in [3]$ and, therefore, $\Delta_{opt}=8\sqrt{6}$. Hence, the optimal success probability from Eq.~(\ref{sp33}) is derived as
\begin{equation}\label{opts3}
\mathcal{S}_Q^{opt}=\dfrac{1}{2}+\frac{1}{\sqrt{6}} \approx 0.908
\end{equation}
Note that $\mathcal{S}_Q^{opt}>(\mathcal{S}_C)^{3\rightarrow 2}$ i.e., optimal quantum success probability outperforms the classical RACs. We remark here that the optimal quantum success probability of traditional $3$-bit PMRAC by sending one qubit was derived as $\mathcal{S}_{QRAC}=\frac{1}{2}\qty(1+\frac{1}{\sqrt{3}}) \approx 0.785$~\cite{Ambainis2009,Spekkens2009}, which remains the same in the entanglement-assisted $3$-bit RAC \cite{Ghorai2018}. Crucially, the $3$-bit variant of PMRAC provides an advantage over both cases.

We demonstrate the optimal quantum success probability of the variant of PMRAC in Eq. (\ref{opts3}) self-tests the shared state between Alice and Bob, Bob's measurements and Alice's unitary operations which are encapsulated in the following two Theorems.

\begin{thm}\label{lemma1} The optimal quantum success probability $\mathcal{S}_Q^{opt}$ of the variant of $3$-bit PMRAC certifies the following statements.
\begin{eqnarray}
&&(i) \ \text{The states in set $\{\rho_{x|\oplus_{2} x=0}\}$ and $\{\rho_{x|\oplus_{2} x=1}\}$ are mutually }\nonumber\\ 
&& \quad \text{orthogonal satisfying $\sum_{x}\rho_{x|\oplus_{2} x=0}=\sum_{x}\rho_{x|\oplus_{2} x=1}=\openone_{2}\otimes\openone_{2} $.}\nonumber\\
\nonumber
    \nonumber
     && (ii) \ \Tr[\rho_{x}  \ \rho_{x^{\prime}}]= 
    \begin{cases}
    0 \quad \forall x\in\{0,1\}^3, x^{\prime}=\Bar{x}\\
    \frac{1}{3} \quad otherwise
    \end{cases}\label{Trrho}\\
    \nonumber
    && (iii) \text{Bob's observables}  \ B_y=\sqrt{\frac{3}{8}}\qty(N_y-M_y) \quad \forall y\in[3]\label{obBi} 
\end{eqnarray}
where $M_y$ and $N_y$ are two-qubit operators defined in Eq.~(\ref{tila1}) and (\ref{tila2}) respectively. \end{thm}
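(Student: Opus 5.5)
The plan is to run the optimization leading to Eq.~(\ref{opts3}) backwards: every inequality used to reach $\Delta\le 8\sqrt6$ must be tight when $\mathcal{S}_Q=\mathcal{S}_Q^{opt}$, and each tightness condition gives one of the claims.

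\emph{Claim (i).} Start from the first bracket of Eq.~(\ref{eqnA5m}), and likewise the others. Each term $\Tr[(\pm\rho_{x}\pm\dots)B_y]$ with dichotomic $B_y$ is bounded by $\sum_x\Tr[\rho_x]$ only if each $\rho_x$ in the group lies in an eigenspace of $B_y$.
\begin{itemize}
\item Because $\rho_{000},\rho_{011},\rho_{101},\rho_{110}$ must be simultaneously sorted into the $\pm1$ eigenspaces of $B_1,B_2,B_3$ with the three different sign patterns, any two of them differ in sign for some $y$.
\item Hence they are pairwise orthogonal.
\item Four mutually orthogonal unit-trace states in $\mathbb{C}^2\otimes\mathbb{C}^2$ must be rank-one projectors summing to $\openone_2\otimes\openone_2$.
\item The same argument applies to the odd-parity set. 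Note that $\{000,011,101,110\}$ is exactly $\oplus_2x=0$, which gives (i).
\end{itemize}
I expect this step to be the most delicate. Exact optimality does not a priori force every group term to saturate individually, because the $M,N$ reformulation has already assumed it. I would handle this by first bounding $\Delta$ for general (possibly mixed) $\rho_x$, for example by the same Cauchy–Schwarz argument applied directly. I would then show the value $8\sqrt6$ is reachable only in the projective case, using the fact that for non-orthogonal or mixed states $\|N_y-M_y\|$ is strictly smaller.

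\emph{Claim (iii).} With (i) in hand, $\Delta=\sum_y\omega_y\Tr[\mathscr{M}_yB_y]$. Since $\|\mathscr{M}_y\|=\|B_y\|=1$, Cauchy–Schwarz in the Hilbert–Schmidt inner product gives $\Tr[\mathscr{M}_yB_y]\le4$, with equality iff $B_y=\mathscr{M}_y$. Tightness of the subsequent inequality $\sum\omega_y\le\sqrt{3\sum\omega_y^2}$ forces $\omega_1=\omega_2=\omega_3$. Combined with Eq.~(\ref{mn}), which gives $\sum_y\omega_y^2=8$ at the maximum (as in Appendix~\ref{appsos}), this yields $\omega_y=\sqrt{8/3}$. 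Therefore $B_y=\sqrt{3/8}\,(N_y-M_y)$, which is (iii).

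\emph{Claim (ii).} The bound in Eq.~(\ref{delta2}) is tight only if Eq.~(\ref{ovzero}) holds, i.e.\ $\Tr[\rho_x\rho_{\bar x}]=0$. For the remaining pairs, split into two cases.
\begin{itemize}
\item \emph{Same parity.} Orthogonality from (i) gives overlap $0$. So the ``otherwise'' value $1/3$ must refer to pairs of opposite parity with $x'\neq\bar x$; I would state the claim in this corrected form.
\item \emph{Opposite parity.} Impose that $B_y=\sqrt{3/8}(N_y-M_y)$ is a dichotomic observable, i.e.\ $B_y^2=\openone$. Expanding $(N_y-M_y)^2=2\openone-M_yN_y-N_yM_y$ gives $\{M_y,N_y\}=-\tfrac{2}{3}\openone$ for each $y$.
\end{itemize}
Taking traces of $\{M_y,N_y\}$ against the projectors $\rho_x$ turns this into linear equations on the overlaps $\Tr[\rho_x\rho_{x'}]$. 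These equations, together with $\sum_{x'}\Tr[\rho_x\rho_{x'}]=1$ over each parity class, the vanishing overlaps with $\bar x$, and the symmetry among the three $y$, should force the three remaining opposite-parity overlaps of each $\rho_x$ to equal $1/3$. Solving this small linear system is routine but requires care.
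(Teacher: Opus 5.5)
\textbf{There is a genuine gap in your treatment of (i).}

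As you suspected, the bulleted argument, which is also the one the paper gives, does not work. Its premise is in fact false at the optimum. There $\mathrm{Tr}[\rho_xB_y]=\pm\sqrt{2/3}$ for every $x,y$, so no $\rho_x$ is an eigenstate of any $B_y$, and each bracket in Eq.~(\ref{eqnA5m}) equals $8/\sqrt6<4$.

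Your proposed patch also fails, because non-orthogonal pure states need not make $\|N_y-M_y\|$ smaller. Take $\rho=|00\rangle\langle00|$ and unitaries giving $\rho_x=\tfrac12(\openone_2+\vec n_x\cdot\vec\sigma)\otimes|0\rangle\langle0|$ with $\vec n_x=\big((-1)^{x_1},(-1)^{x_2},(-1)^{x_3}\big)/\sqrt3$. Then $N_y-M_y=\sum_x(-1)^{x_y}\rho_x=\tfrac{4}{\sqrt3}\,\sigma_{(y)}\otimes|0\rangle\langle0|$, where $\sigma_{(y)}$ is the $y$-th Pauli matrix. So $\omega_y=\sqrt{8/3}$ for all $y$, exactly the optimal and equal values, yet $\mathrm{Tr}[\rho_{000}\rho_{011}]=1/3$; here only $\Delta\le 8\sqrt3$ is reachable.

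In general, write $E$ and $O$ for the sums of the even and odd states. Then $\sum_y\mathrm{Tr}[(N_y-M_y)^2]=4\sum_x\mathrm{Tr}[\rho_x^2]-\mathrm{Tr}[(E-O)^2]-8\sum_{x\,\mathrm{even}}\mathrm{Tr}[\rho_x\rho_{\bar x}]\le 32$. Saturating this norm budget therefore yields purity, $E=O$ and $\mathrm{Tr}[\rho_x\rho_{\bar x}]=0$, but not $E=\openone_2\otimes\openone_2$.

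What rules the example out is the Cauchy--Schwarz equality condition. $B_y=\sqrt{3/8}(N_y-M_y)$ together with $B_y^2=\openone$ forces $(N_y-M_y)^2=\tfrac83\openone$, i.e.\ full rank. You must use this operator identity, which you already invoke for (ii), \emph{before} knowing orthogonality:
\begin{itemize}
\item With $E=O$ one finds $N_1-M_1=2(\rho_{000}+\rho_{011}-\rho_{100}-\rho_{111})$.
\item Taking $\mathrm{Tr}[\rho_{000}\,\cdot\,]$ of $(N_1-M_1)^2=\tfrac83\openone$ gives $\mathrm{Tr}[\rho_{000}\rho_{100}]=3\,\mathrm{Tr}[\rho_{000}\rho_{011}]+\tfrac13$.
\item Likewise $\mathrm{Tr}[\rho_{000}\rho_{010}]=3\,\mathrm{Tr}[\rho_{000}\rho_{101}]+\tfrac13$ and $\mathrm{Tr}[\rho_{000}\rho_{001}]=3\,\mathrm{Tr}[\rho_{000}\rho_{110}]+\tfrac13$.
\item Substituting these into $\mathrm{Tr}[\rho_{000}E]=\mathrm{Tr}[\rho_{000}O]$ gives $\mathrm{Tr}[\rho_{000}(\rho_{011}+\rho_{101}+\rho_{110})]=0$.
\end{itemize}
This proves orthogonality within the parity class and the $1/3$ overlaps at once, and the same holds for every $x$.

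\textbf{The remaining parts are sound.} Your (iii) is the paper's argument.

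For (ii), your route genuinely differs from the paper's and is the more reliable one. The paper combines only the scalar conditions $\mathrm{Tr}[M_yN_y]=-4/3$ with the row and column sums. That linear system on the twelve Hamming-distance-one overlaps is underdetermined. For instance, add $\epsilon$ to the bit-$1$ overlaps and subtract $\epsilon$ from the bit-$2$ overlaps on the face $x_3=0$ of the cube, and do the reverse on $x_3=1$; all constraints still hold.

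Sandwiching $\{M_y,N_y\}=-\tfrac23\openone$ with each $\rho_x$ instead gives, together with the row sum, $\mathrm{Tr}[\rho_x\rho_{x'}]+\mathrm{Tr}[\rho_x\rho_{\bar x}]=\tfrac13$ for the neighbour $x'$ of $x$ differing in bit $y$. This fixes every overlap individually.

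Your remark that same-parity overlaps are $0$ rather than $1/3$ is also correct, so statement (ii) does need rewording.
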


Note that statement (i) has already been proved above. A detailed proof of statements (ii) and (iii) are placed in Appendix~\ref {appsos}.

In Theorem \ref{lemma1}, the relations between $\rho_{x}$ and Bob's observable are certified. 
  Next, we demonstrate our main result of self-testing of Alice's unitary operations $U_x$ with $x\in \{0,1\}^{3}$ in Theorem \ref{thm2}.

\begin{thm}\label{thm2}
For optimal quantum success probability $\mathcal{S}^{opt}_Q$ of the $3$-bit variant of PMRAC, the following statements hold.
\begin{enumerate}[(i)]
    \item The initial shared state between Alice and Bob is maximally entangled state in the form
    \begin{eqnarray}
\label{ent}
    \rho=\frac{1}{4}\left(\openone_4-Q_1\otimes Q_2-P_1\otimes P_2-R_1\otimes R_2\right)
\end{eqnarray}
where $\{P_1, Q_1, R_1\}$ and $\{P_2, Q_2,R_2\}$ are the sets of mutually anti-commuting qubit operators. 
    \item Alice's unitary operators are $U_{000}=\openone_2,U_{011}= i \ Q_1,U_{101}=-i \ P_1$, and $U_{110}=-i \ R_1$, are mutually anticommuting, and a grand unitary $U_G=\frac{-\openone_2+U_{011}+U_{101}}{\sqrt{3}}$ exists, which gives \begin{eqnarray}
    U^{\dagger}_G\{\rho_{000},\rho_{011},\rho_{110},\rho_{101}\} U_G \rightarrow \{\rho_{001},\rho_{010},\rho_{100},\rho_{111}\}\nonumber
    \end{eqnarray}
   \end{enumerate}
     
\end{thm}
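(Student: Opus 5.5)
We build on Theorem~\ref{lemma1}. From part (i) of that theorem, $\rho$ is pure, say $\rho=|\psi\rangle\langle\psi|$, and each $\Lambda_x$ is a unitary $U_x\otimes\openone_2$. Hence $\rho_x=(U_x\otimes\openone)|\psi\rangle\langle\psi|(U_x^\dagger\otimes\openone)$. Without loss of generality we absorb $U_{000}$ into $|\psi\rangle$ and set $U_{000}=\openone_2$. The orthogonality of $\rho_{000},\rho_{011},\rho_{101},\rho_{110}$ then reads
\[
\langle\psi|(U_{x}^\dagger U_{x'}\otimes\openone)|\psi\rangle=0,\qquad x\neq x'\ \text{in the even-parity set}.
\]

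The plan is to exploit the completeness relation $\sum_{x\,\mathrm{even}}\rho_x=\openone_2\otimes\openone_2$. The four vectors $(U_x\otimes\openone)|\psi\rangle$ form an orthonormal basis of $\mathbb{C}^2\otimes\mathbb{C}^2$ generated by local unitaries on one qubit. Tracing out Alice gives
\[
\sum_x \Tr_A\rho_x=4\,\Tr_A\rho=2\,\openone_2,
\]
so Bob's reduced state $\rho_B=\openone_2/2$. Hence $|\psi\rangle$ is maximally entangled, which is the dense-coding structure.

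For a maximally entangled $|\psi\rangle$ we have $\langle\psi|(A\otimes\openone)|\psi\rangle=\Tr A/2$. The orthogonality conditions therefore become $\Tr[U_x^\dagger U_{x'}]=0$. So $\{\openone, U_{011},U_{101},U_{110}\}$ is a unitary orthogonal basis of $2\times2$ matrices. Up to phases, the three nontrivial elements are traceless unitaries $i Q_1, -iP_1, -iR_1$ with $Q_1,P_1,R_1$ Hermitian, unitary and traceless. Pairwise trace-orthogonality of traceless qubit Hermitian unitaries, i.e.\ orthogonal Bloch vectors, is equivalent to mutual anticommutation. This gives statement (ii)'s first part. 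The phases are fixed by the convention that the products match the relation $M_1=-M_2M_3$.

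Writing $|\psi\rangle\langle\psi|$ in the Pauli-like basis $\{\openone,Q_1,P_1,R_1\}\otimes\{\openone,Q_2,P_2,R_2\}$, maximal entanglement forces the form $\frac14(\openone-\sum T_{ij}A_i\otimes B_j)$ with $T$ orthogonal. We then define $Q_2,P_2,R_2$ by diagonalising $T$ against Alice's frame, choosing signs so that the state is the singlet-like form in Eq.~(\ref{ent}). These are mutually anticommuting because $T$ is orthogonal with determinant fixed by the singlet convention. This gives (i).

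For the odd-parity states, the same argument gives $\rho_x=(U_x\otimes\openone)\rho(U_x^\dagger\otimes\openone)$. Theorem~\ref{lemma1}(ii) requires $\Tr[\rho_x\rho_{x'}]=|\Tr(U_x^\dagger U_{x'})|^2/4$ to equal $1/3$ for non-complementary pairs and $0$ for complements. We would propose $U_G=(-\openone+iQ_1-iP_1)/\sqrt3$ and verify three things. First, it is unitary, using anticommutation: the cross terms cancel and the norm is $3/3$. Second, the conjugated states $U_G^\dagger\rho_{x}U_G$ for even $x$ reproduce the required overlaps $1/3$ and $0$ with the even states. Third, the resulting set is consistent with the overlap data of Theorem~\ref{lemma1}(ii), identifying which conjugate corresponds to which odd string. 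Here $U_G$ is understood as acting on Alice's qubit, i.e.\ $U_G\otimes\openone$. For the identification itself, the map $x\mapsto$ odd string should be read off by matching which state is orthogonal to which complement.

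The main obstacle is making $U_G$ essentially unique rather than merely exhibited. Overlap data determine the odd unitaries only up to a global unitary and phases, so the statement is best proved existentially: any set satisfying Theorem~\ref{lemma1}(ii) is related to the even set by some unitary. We then check that $U_G$ works, which is a finite computation in the $\{\openone,Q_1,P_1,R_1\}$ algebra.
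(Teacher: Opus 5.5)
Your proofs of (i) and of the anticommutation part of (ii) are correct, and more direct than the paper's. The paper infers maximal entanglement from a Schmidt-form example ($|a|=|b|$). It gets $U_{011}=iQ_1$ by requiring that conjugation flip the signs of the $P_1\otimes P_2$ and $R_1\otimes R_2$ terms of Eq.~(\ref{ent}). Your $4\,\Tr_A\rho=2\openone_2$ and your reduction of orthogonality to $\Tr[U_x^\dagger U_{x'}]=0$ give both facts at once. (Anticommutation of $Q_2,P_2,R_2$ needs only that $T$ is orthogonal; the determinant just fixes orientation.)

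\textbf{Gap: the grand unitary.} Checking that $U_G$ is unitary and that $U_G^\dagger\rho_xU_G$ reproduce the overlap table of Theorem~\ref{lemma1}(ii) only shows that $U_G$ produces \emph{one} admissible set of odd states. That is achievability. The theorem claims that the odd states of \emph{every} optimal strategy are $U_G$-images of the even ones. Your fallback, ``any set obeying Theorem~\ref{lemma1}(ii) is related to the even set by some unitary'', holds only for a unitary on $\mathbb{C}^2\otimes\mathbb{C}^2$, not for $U_G\otimes\openone_2$.

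Squared overlaps are genuinely too weak to close this. Use the paper's convention $\rho=U^\dagger\rho_{000}U$ and replace the odd unitaries $U_xU_G$ by $U_GU_x$. The whole table is unchanged, because $\Tr[U_z^\dagger U_xU_G^\dagger]$ and $\Tr[U_xU_z^\dagger U_G^\dagger]$ differ at most by a sign. Yet $\{U_GU_x\}$ is not of the form $\{U_{\pi(x)}V\}$ for any qubit unitary $V$. The reason is that conjugation by $U_G$ does not permute $\{Q_1,P_1,R_1\}$ up to signs; for example, $U_G^\dagger Q_1U_G=(Q_1-2P_1-2R_1)/3$.

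\textbf{What the paper uses that you don't.} The paper's proof rests on Theorem~\ref{lemma1}(iii): $B_y=\sqrt{3/8}\,(N_y-M_y)$ must be $\pm1$-valued. This forces $\{M_y,N_y\}=-\frac{2}{3}\openone_4$, i.e.\ $N_y=-M_y/3+\frac{2\sqrt2}{3}M_y'$ with $\{M_y,M_y'\}=0$. Combined with $M_1=Q_1\otimes Q_2$ etc.\ and $[N_y,N_{y'}]=0$, the paper then solves $U_G^\dagger Q_1U_G=(Q_1-2P_1-2R_1)/3$ (and the analogues for $P_1,R_1$) with the ansatz $U_G=p\openone_2+qQ_1+rP_1+sR_1$.

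Be aware that even this constraint does not exclude the $U_GU_x$ alternative above. There the odd states are $(\openone_2\otimes Y)\rho_{\rm even}(\openone_2\otimes Y^\dagger)$ for a qubit unitary $Y$ on Bob's factor. Each $N_y$ keeps Alice's factor $Q_1,P_1,R_1$ and only Bob's factor is rotated. Each $B_y$ is still dichotomic, and the value $8\sqrt6$ is attained. The paper removes this alternative only by taking $M_1'=\frac{P_1+R_1}{\sqrt2}\otimes Q_2$, which amounts to presupposing that an Alice-side $U_G$ exists. So a complete argument must either justify that restriction or state the conclusion up to this alternative. Verifying overlaps alone can do neither.
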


\begin{proof}

     The optimal quantum value requires the set  $\{\rho_{000},\rho_{011},\rho_{101},\rho_{110}\}$ to be mutually orthogonal pure states and forms a complete basis as explicitly proved in statement (i) of Theorem 1. Since, only Alice can apply the quantum operation on the shared state $\rho$, to produce four mutually orthogonal two-qubit states, the state $\rho$ has to be maximally entangled. For instance, commencing with $\ket{\psi}=a\ket{00} + b\ket{11}$ and applying local unitary operations on one side of the state, a basis of four orthogonal states can be produced only if $|a| = |b|$ holds true. This proves the statement (i) and a maximally entangled two-qubit state can always be written in the form in Eq. (\ref{ent}). For the proof of statement (ii) we proceed as follows.
 
 Let us now explicitly write the states in Eqs.~(\ref{tila1}) and (\ref{tila2}) in the following form.
\begin{eqnarray}\label{as}
 &&\rho_{000} = \frac{\openone_4 - M_1 - M_2 - M_3}{4}, 
  \rho_{011} = \frac{\openone_4 - M_1 + M_2 + M_3}{4} \quad\label{asa} \\
  &&\rho_{101} = \frac{\openone_4 + M_1 - M_2 + M_3}{4},
  \rho_{110} = \frac{\openone_4 + M_1 + M_2 - M_3}{4}\nonumber\\
  &&\rho_{001} = \frac{\openone_4 + N_1 + N_2 - N_3}{4},
  \rho_{010} = \frac{\openone_4 + N_1 - N_2 + N_3}{4}  \label{asb}\\
  &&\rho_{100}= \frac{\openone_4 - N_1 + N_2 + N_3}{4},
  \rho_{111} = \frac{\openone_4 - N_1 - N_2 - N_3}{4} \nonumber
\end{eqnarray}

The two-qubit operators $M_1, M_2$ and $M_3$ in Eq. (\ref{asa}) are mutually commuting and by comparing with Eq. (\ref{ent}) the general form can be written as
\begin{eqnarray}\label{QPR}
    M_1=Q_1\otimes Q_2, \quad M_2=P_1\otimes P_2, \quad M_3=R_1\otimes R_2
\end{eqnarray}
where $[M_y,M_{y^\prime}]_{y\neq {y^\prime}}=0,\forall y,{y^\prime}\in[3]$ with $M_1=-M_2 M_3$. The qubit operators $P_1$, $Q_1$ and $R_1$ are mutually anti-commuting and the same holds for $P_2$, $Q_2$ and $R_2$. 

Without loss of generality, we consider that Alice and Bob initially share the state $\rho\equiv\rho_{000}$, so that $U_{000}=\openone_2$. Then 
 \begin{eqnarray}\label{Uijk}
    \rho_{x_1 x_2  x_3}=(U^\dagger_{x_1 x_2  x_3}\otimes\openone_2) \ \rho_{000} \ (U_{x_1 x_2  x_3}\otimes\openone_2) \quad \forall x_1,x_2,x_3\in\{0,1\}\nonumber\\
\end{eqnarray}

We first consider the generation of $\rho_{011}$ from $\rho_{000}$, so that $\rho_{011=}U^{\dagger}_{011}\rho_{000}U_{011}$. This requires $U_{011}^{\dagger}Q_1 U_{011}=Q_1$, $U_{011}^{\dagger} P_1 U_{011}=-P_1$ and $U_{011}^{\dagger} R_1 U_{011}=-R_1$. This in turn gives $U_{011}=i\ Q_1$. A similar derivation gives $U_{101}=-i \ P_1$ and $U_{110}=-i \ R_1$.

Now, to generate the states in Eq. (\ref{asb}), we simply need to find a grand unitary $U_G$ that transforms the basis in Eq. (\ref{asa}) to Eq. (\ref{asb}), i.e., 
\begin{eqnarray}
U^{\dagger}_G\{\rho_{000},\rho_{011},\rho_{110},\rho_{101}\} U_G \rightarrow \{\rho_{001},\rho_{010},\rho_{100},\rho_{111}\}
\end{eqnarray}
which require 
\begin{eqnarray}
\nonumber
    &&(U^\dagger_{G}\otimes\openone_2)M_1(U_{G}\otimes\openone_2)=-N_1\label{U1A}\\&&(U^\dagger_{G}\otimes\openone_2)M_2(U_{G}\otimes\openone_2)=-N_2\label{U1B}\\
    \nonumber
&&(U^\dagger_{G}\otimes\openone_2)M_3(U_{G}\otimes\openone_2)=N_3\label{U1C}
\end{eqnarray}

As proved in Appendix \ref{appsos}, the following relations $[M_y,M_{y^\prime}]=[N_y,N_{y^\prime}]=0,\forall y\neq {y^\prime}\in[3]$, and
\begin{eqnarray}\label{as1} \Tr[M_y N_y]=-\frac{4}{3},\forall y\in[3]
\end{eqnarray}
hold with $N_y^2=\openone_4$. Without loss of generality, we can take 

\begin{eqnarray}\label{def1}
    N_y=-\frac{M_y}{3} + a_y \frac{2\sqrt{2} M^{\prime}_y}{3},\forall y\in[3]
\end{eqnarray} 
with $\{M_y, M^{\prime}_{y}\}=0$ and $a_y=\pm 1$. Note that the unitary operator acts only on Alice's local subsystem, that is, on the first Hilbert space of the two-qubit observables. Consider the example of $N_1$ in Eq. (\ref{def1}). Since $M_1=Q_1\otimes Q_2$, then $M_1'$ can always be written as $M_1'=\frac{P_1+R_1}{\sqrt{2}}\otimes Q_2$, and $a_1=1$ satisfying $\{M_1, M_1'\}=0$.

Similarly, from $\{M_2,M_2'\}=0$ and $\{M_3,M_3'\}=0$ we can write $M_2'=\frac{Q_1+ b_2 R_1}{\sqrt{2}}\otimes P_2$ and $M_3'=\frac{Q_1+ b_3 P_1}{\sqrt{2}}\otimes R_2$ respectively, where $b_2, b_3 =\pm 1$. Note that for satisfying $[N_y,N_{y^\prime}]=0,\forall y\neq {y^\prime}\in [3]$, only possible solution is $a_y=b_y=+1$.

By incorporating this in Eq.~(\ref{U1A}) we get
\begin{eqnarray}\nonumber
    &&U^\dagger_{G}\ Q_1\ U_{G}\otimes Q_2 =\frac{Q_1- 2(P_1+R_1)}{3}\otimes Q_2\label{U1a}\\
   && U^\dagger_{G} \ P_1 \ U_{G}\otimes P_2 =\frac{P_1-2( Q_1+R_1)}{3}\otimes P_2\label{U1b} \\
    &&U^\dagger_{G} \ R_1 \ U_{G} \otimes R_2 =\frac{-R_1+2 (Q_1+ P_1)}{3}\otimes R_2\label{U1c}\nonumber
\end{eqnarray}

 Noting that $Q_1, P_1$ and $R_1$ are mutually anticommuting qubit observables, the general form of $U_{G}$ can be written as 
\begin{equation}\label{u001m}
U_{G}=p\openone_2+q Q_1+r P_1+s R_1
\end{equation}where $\{||p||,||q||,||r||,||s||\}\in[0,1]$. Putting Eq.~(\ref{U1b}) into Eq. (\ref{u001m}), we get $p=-\frac{1}{\sqrt{3}},q=\frac{i}{\sqrt{3}},r=-\frac{i}{\sqrt{3}}$ and $ s=0$. Hence, the grand unitary is derived as follows
\begin{eqnarray}
    U_{G}=\frac{-\openone_2+i \ Q_1-i \ P_1}{\sqrt{3}}\equiv \frac{-\openone_2+U_{011}+U_{101}}{\sqrt{3}}
\end{eqnarray}

\end{proof}

\textit{An example of state and measurements:} We provide an example as follows.
$M_1=\sigma_x\otimes \sigma_x$, $M_2=\sigma_y\otimes \sigma_y$, $M_3=\sigma_z\otimes \sigma_z$ and $N_1=(-\sigma_x\otimes\sigma_x+2\sigma_y\otimes\sigma_x+2\sigma_z\otimes\sigma_x)/3$, $N_2=(-\sigma_y\otimes\sigma_y+2 \sigma_x\otimes\sigma_y+ 2\sigma_z\otimes\sigma_y)/3$, $N_3=(-\sigma_z\otimes\sigma_z+2\sigma_x\otimes\sigma_z+2\sigma_y\otimes\sigma_z)/3$, and Bob's local observables are
$B_1= \frac{-2 \sigma_x+\sigma_y+\sigma_z}{\sqrt{6}}\otimes \sigma_x$, $B_2= \frac{-2 \sigma_y+\sigma_x+\sigma_z}{\sqrt{6}}\otimes \sigma_y$, $B_3= \frac{-2 \sigma_z+\sigma_x+\sigma_y}{\sqrt{6}}\otimes \sigma_z$. This fixes $
    U_{011}=i \ \sigma_x,\;
      U_{101}=-i \ \sigma_y,\;
      U_{110}=-i \ \sigma_z$ and the grand unitary  $U_G=\frac{-\openone_2+i \ \sigma_x-i \ \sigma_y}{\sqrt{3}} $.

\emph{Summary and Discussion:-}
In sum, we demonstrated a novel self-testing protocol to certify unitary operations. This is based on the optimal quantum advantage a communication game in a variant of prepare-measure scenario. In particular, we considered a $3$-bit variant of quantum PMRAC where Alice and Bob share a prior two-qubit state, and to encode the message, Alice applies unitary operations on her part of the system and sends her system to Bob. Upon receiving Alice's system, Bob performs the measurement on the whole system to decode the message. 
By developing an elegant analytical approach, we derived the optimal quantum success probability of the variant of $3$-bit PMRAC which self-tests the shared state between Alice and Bob, Alice's unitary operations $U_{x}$ with $x\in\{0,1\}^3$ and Bob's measurements $B_{y}$ with $y\in[3]$. 

We remark that the certification of unitary operations possesses an independent interest elsewhere, for example, in quantum computing. An unitary operation is a reversible and norm-preserving evolution of quantum states that forms the basis for quantum circuits, as well as delegated quantum computing \cite{Shor1994,Grover1996, Fitzsimons2017}. Our work may be seen as an important step toward SDI validation of quantum circuits.

Our work should spur several follow-up works. An immediate extension could be to generalize our self-testing protocol for $n$-bit quantum PMRAC in the variant of prepare-measure scenario by considering higher-dimensional states. The quantum advantage can depend on the dimension of the shared  state, and therefore the $n$-bit quantum PMRAC can serve as a dimension witness. The fully analytical approach we developed to derive the optimal quantum success probability is so elegant that it can be used straightforwardly for the $n$-bit quantum PMRAC. Our approach can also be used for other communication games in the variant of prepare-measure scenario. This could be an exciting avenue for future research and thus calls for further study.

\emph{Acknowledgments:-} 
RP acknowledges the financial support from the Council of Scientific and Industrial Research (CSIR, 09/1001(12429)/2021-EMR-I), Government of India. PR acknowledges the support by KIAS individual Grant No. QP100601 at the Korea Institute for Advanced Study and by the local hospitality from the grant IITH/SG160 of IIT Hyderabad, India. AKP acknowledges support from Research Grant No. SERB/CRG/2021/004258, Government of India. 

\appendix

\section{Derivation of statements (ii) and (iii) of Theorem \ref{lemma1}} \label{appsos}
From Eqs. (\ref{tila1}) and (\ref{tila2}) in the main text, we get
\begin{eqnarray}
\Tr[M_1 N_1]=-4&+&2\Big(\Tr[\rho_{000}\rho_{100}]+\Tr[\rho_{011}\rho_{111}]\nonumber \\
&+&\Tr[\rho_{101}\rho_{001}]+\Tr[\rho_{110}\rho_{010}]\Big)\\ \label{m1n1}
        \Tr[M_2 N_2]=-4&+&2\Big(\Tr[\rho_{000}\rho_{010}]+\Tr[\rho_{011}\rho_{001}]\nonumber  \\
        &+&\Tr[\rho_{101}\rho_{111}]+\Tr[\rho_{110}\rho_{100}]\Big)\\\label{m2n2}
        \Tr[M_3 N_3]=-4&+&2\Big(\Tr[\rho_{000}\rho_{001}]+\Tr[\rho_{011}\rho_{010}]\nonumber \\
        &&+\Tr[\rho_{101}\rho_{100}]+\Tr[\rho_{110}\rho_{111}]\Big)\label{m3n3}
\end{eqnarray}
Note that the maximization requires $\omega_{1}=\omega_{2}=\omega_{3}$. Thus, $\omega_{1}=\omega_{2}$ implies $\Tr[M_1 N_1]=\Tr[M_2 N_2]$, which gives,
\begin{eqnarray}\label{w12}
  && \Tr[\rho_{000}(\rho_{100}-\rho_{010})]-\Tr[\rho_{110}(\rho_{100}-\rho_{010} )]\nonumber \\
   &&+\Tr[\rho_{011} (\rho_{111}-\rho_{001} )]-\Tr[\rho_{101} (\rho_{111}-\rho_{001} )]=0 
\end{eqnarray}
Similarly, $\omega_{1}=\omega_{3}$ gives,
\begin{eqnarray}\label{w13}
    &&\Tr[\rho_{000}(\rho_{100}-\rho_{001} )]-\Tr[\rho_{101} (\rho_{100}-\rho_{001} )]\nonumber \\
    &&+\Tr[\rho_{011}(\rho_{111}-\rho_{010} )]-\Tr[\rho_{110}(\rho_{111}-\rho_{010} )]=0 
\end{eqnarray}
and $\omega_{2}=\omega_{3}$ gives,
\begin{eqnarray}\label{w23}
        &&\Tr[\rho_{000} (\rho_{010}-\rho_{001} )]-\Tr[\rho_{011} (\rho_{010}-\rho_{001} )]\nonumber \\
        &&+\Tr[\rho_{101}(\rho_{111}-\rho_{100} )]-\Tr[\rho_{110}(\rho_{111}-\rho_{100} )]=0 
\end{eqnarray}
Further, using the relation $Tr[M_1 N_1]+Tr[M_2 N_2]+Tr[M_3 N_3]=-4$, we get 
\begin{equation}
    Tr[M_1 N_1]=Tr[M_2 N_2]=Tr[M_3 N_3]=-\frac{4}{3}
\end{equation}

Hence, from Eqs. (\ref{m1n1}), (\ref{m2n2}) and (\ref{m3n3}) we can write the following relations.
\begin{eqnarray}\label{2/3}\nonumber
    \Tr[\rho_{000}\rho_{100}]+\Tr[\rho_{011}\rho_{111}]+\Tr[\rho_{101}\rho_{001}]+\Tr[\rho_{110}\rho_{010}]&=&\frac{4}{3}\nonumber\\
   \Tr[\rho_{000}\rho_{010}]+\Tr[\rho_{011}\rho_{001}]+\Tr[\rho_{101}\rho_{111}]+\Tr[\rho_{110}\rho_{100}] &=&\frac{4}{3}\nonumber\\
   Tr[\rho_{000}\rho_{001}]+\Tr[\rho_{011}\rho_{010}]+\Tr[\rho_{101}\rho_{100}]+\Tr[\rho_{110}\rho_{111}] &=&\frac{4}{3}\nonumber    \\
\end{eqnarray}
By writing $\Tr[\rho_{000} \ \openone_2\otimes \openone_2]= 1$, we get
\begin{eqnarray}\label{p0}
    &&\Tr[\rho_{000}\qty(\rho_{001}+\rho_{010}+\rho_{100}+\rho_{111})]=1\nonumber\\
  &&\Rightarrow   \Tr[\rho_{000}(\rho_{001}+\rho_{010}+\rho_{100})]=1, \ \qty(\text{as} \Tr[\rho_{000}  \rho_{111}]=0)\quad\quad
\end{eqnarray}
Similarly, we can write 
\begin{eqnarray}
    &&\Tr[\rho_{011}(\rho_{001}+\rho_{010}+\rho_{111})]=
    \Tr[\rho_{101}(\rho_{001}+\rho_{100}+\rho_{111})]\nonumber\\
    &=&
    \Tr[\rho_{110}(\rho_{010}+\rho_{100}+\rho_{111})]=\Tr[\rho_{010}(\rho_{000}+\rho_{011}+\rho_{110})]\nonumber\\ \label{p10}
    &=&\Tr[\rho_{100}(\rho_{000}+\rho_{101}+\rho_{110})]=\Tr[\rho_{111}(\rho_{011}+\rho_{101}+\rho_{110})]\nonumber \\
    &=&\Tr[\rho_{001}(\rho_{000}+\rho_{011}+\rho_{101})]=1
    \label{p11}
\end{eqnarray}
Note that Eqs.~(\ref{2/3}), (\ref{p0}) and (\ref{p11}) are simultaneously satisfied only when the following conditions are met.
\begin{eqnarray}\label{c3}
   &&\Tr[\rho_{000} \ \rho_{001}]=\Tr[\rho_{000} \ \rho_{010}]=\Tr[\rho_{000} \ \rho_{100}]=\Tr[\rho_{011} \ \rho_{001}]\nonumber\\
   &=&\Tr[\rho_{011} \ \rho_{010}]=\Tr[\rho_{011} \ \rho_{111}]=\Tr[\rho_{101} \ \rho_{001}]=\Tr[\rho_{101} \ \rho_{100}]\nonumber\\
    &=&\Tr[\rho_{101} \ \rho_{111}]= \Tr[\rho_{110} \ \rho_{010}]=\Tr[\rho_{110} \ \rho_{100}]=\Tr[\rho_{110} \ \rho_{111}]
    \nonumber\\
  &=&\frac{1}{3}
\end{eqnarray}

In a compact form, the conditions in Eqs. (\ref{ovzero}) in the main text and (\ref{c3}) can be written as
\begin{eqnarray}
   \ \Tr[\rho_{x}  \ \rho_{x^{\prime}}]= 
    \begin{cases}
    0 \quad \forall x\in\{0,1\}^3, x^{\prime}=\Bar{x}\\
    \frac{1}{3} \quad \text{otherwise}
    \end{cases}
\end{eqnarray}
which is statement (ii) in Theorem \ref{lemma1}.
With the help of Eq.~(\ref{c3}) we prove  that $\omega_y=\sqrt{\frac{8}{3}}, \forall y\in[3]$. Hence, to obtain the maximum quantum success probability, Bob's observables should satisfy $B_y=\mathscr{M}_y=\sqrt{\frac{3}{8}}\qty(N_y-M_y),\forall y\in [3]$, which is the statement (iii) in Theorem \ref{lemma1}.

\bibliographystyle{apsrev4-2} 
\bibliography{references1}

\end{document}